\newtheorem{proposition}{Proposition}
\theoremstyle{definition}
\newtheorem{definition}{Definition}
\theoremstyle{remark}
\title{\textbf{Similarity as Thermodynamic Work: \\Between Depth and Diversity \\--- from Information Distance to Ugly Duckling}}
\author{Kentaro IMAFUKU}
\affil[]{National Institute of Advanced Industrial Science and Technology (AIST)}
\date{\empty}
\begin{document}
\maketitle

\begin{abstract}
Defining similarity is a fundamental challenge in information science. Watanabe's Ugly Duckling Theorem highlights diversity, while algorithmic information theory emphasizes depth through Information Distance. We propose a statistical-mechanical framework that treats program length as energy, with a temperature parameter unifying these two aspects: in the low-temperature limit, similarity approaches Information Distance; in the high-temperature limit, it recovers the indiscriminability of the Ugly Duckling theorem; and at the critical point, it coincides with the Solomonoff prior. We refine the statistical-mechanical framework by introducing regular universal machines and effective degeneracy ratios, allowing us to separate redundant from core diversity. This refinement yields new tools for analyzing similarity and opens perspectives for information distance, model selection, and non-equilibrium extensions.
\end{abstract}
\noindent\textbf{Keywords:} similarity, thermodynamic work, Information Distance, Solomonoff universal prior, Ugly Duckling theorem, Kolmogorov complexity, statistical mechanics

\section{Introduction}
Similarity plays a fundamental role in science.
Many advances in physics, information theory, and data science have been achieved by identifying similar structures and constructing unified frameworks that explain them.
Despite its importance, however, defining similarity in a principled way remains a long-standing challenge.

Two contrasting approaches illustrate this difficulty.
Watanabe’s \emph{Ugly Duckling theorem}\cite{watanabe1969,watanabe1985,watanabe1986} shows that if all predicates are treated equally, any pair of objects becomes equally similar, emphasizing the dependence on arbitrary weighting.
In contrast, algorithmic information theory introduces the \emph{Information Distance}\cite{bennett1998,li2008}, defined through Kolmogorov complexity~\cite{kolmogorov1965}, which quantifies similarity by the depth of the shortest description.
These perspectives highlight seemingly opposing aspects---diversity versus depth---and make clear the fundamental difficulty of formalizing similarity.

\medskip
Recent developments in \emph{information thermodynamics}
(e.g., \cite{szilard1929,brillouin1956,jaynes1957,landauer1961,bennett1982,
zurek1989,leff2002,sagawa2012,sagawa2014,parrondo2015,ito2015,ito2018})
and in \emph{algorithmic thermodynamics}
(e.g., \cite{levin1974,chaitin1975,chaitin1987,staiger1998,calude2004partial,
tadaki2002,tadaki2008,baez2012,tadaki2019,ebtekar2025})
provide a natural bridge for importing statistical--mechanical concepts into information theory.
Building on this line of work, we treat program length as energy,
introduce partition functions and free energies,
and define similarity as the thermodynamic work required to couple two objects.
This yields a unified framework where the balance between description length and multiplicity plays the role of energy--entropy tradeoff.

\medskip
Figure~\ref{fig:concept} illustrates this perspective.
The inverse temperature $\beta$ tunes the balance between \emph{depth} (minimal description length) 
and \emph{diversity} (residual entropy of admissible cores).
Because similarity is defined through free--energy differences, 
trivial wrapper redundancies uniformly cancel under the regular-UTM assumption, leaving only core diversity as the entropic contribution.
Within this picture, three classical notions of similarity emerge as temperature regimes:
the Information Distance at low temperature,
the Solomonoff universal prior at $\beta=\ln 2$,
and the Ugly Duckling phase at high temperature.

\begin{figure}[t]
  \centering
  \includegraphics[width=0.75\linewidth]{./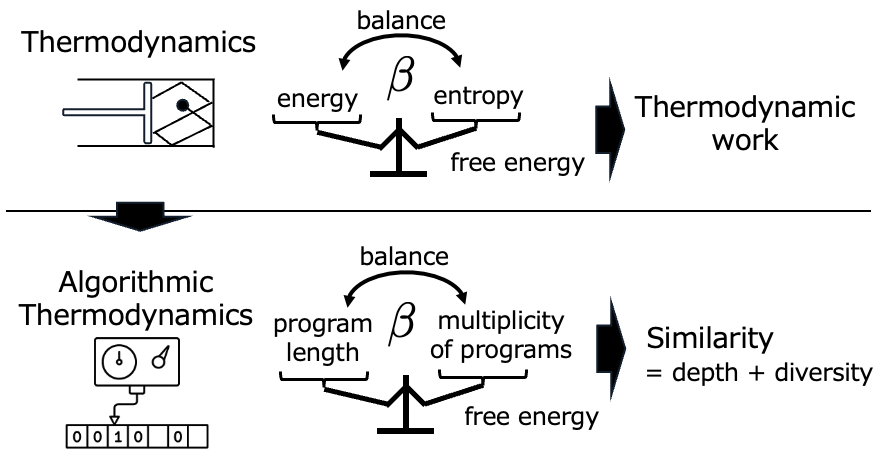}
  \caption{Similarity as a thermodynamic balance. 
Program length plays the role of energy (depth), 
while effective degeneracy provides residual entropy (diversity). 
The tradeoff is governed by the inverse temperature $\beta$, 
interpolating between three regimes: 
Information Distance at low temperature, 
the Solomonoff prior at $\beta=\ln 2$, 
and the Ugly Duckling phase at high temperature.}
  \label{fig:concept}
\end{figure}

\medskip
This paper makes the following contributions:
\begin{itemize}
\item We propose a statistical--mechanical framework that redefines similarity as thermodynamic work.
\item We demonstrate how the Information Distance, the Solomonoff universal prior, and the Ugly Duckling theorem arise in distinct temperature regimes.
\item We introduce the notion of \emph{regular UTMs}, separating trivial wrapper redundancy from genuine core diversity, and define \emph{effective degeneracy} as a new measure of diversity.
\item We identify residual entropy ratios as a structural source of similarity, complementing description length.
\item We outline possible non--equilibrium extensions via Jarzynski--type relations~\cite{jarzynski1997,crooks1999,jarzynski2011}.
\end{itemize}

\section{Preliminaries}

In this section we briefly recall the basic notions from algorithmic information theory
and from Watanabe’s combinatorial argument that are relevant to our framework.
For algorithmic information theory, standard references include Kolmogorov’s
original work~\cite{kolmogorov1965}, Chaitin’s program-size complexity~\cite{chaitin1975,chaitin1987}, 
and the textbook by Li and Vitányi~\cite{li2008}.
For the combinatorial perspective, we refer to Watanabe’s papers on the
Ugly Duckling theorem and the principle of the unique choice~\cite{watanabe1969,watanabe1985,watanabe1986}, 
as well as related philosophical discussions by Goodman~\cite{goodman1972} and
methodological debates in taxonomy~\cite{sneath1973}.

\subsection{Kolmogorov complexity and Information Distance}

For a finite binary string $x$, the \emph{Kolmogorov complexity} $K(x)$ is defined 
as the length of the shortest prefix-free program $p$ that outputs $x$
on a fixed UTM $U$~\cite{kolmogorov1965,solomonoff1964a,solomonoff1964b,chaitin1975,li2008}:
\[
K(x) := \min\{\, |p| : U(p)=x \,\}.
\]
Intuitively, $K(x)$ measures the depth of the most concise description of $x$.  
For two strings $x$ and $y$, the conditional complexity $K(y|x)$ is the length
of the shortest program producing $y$ given $x$ as auxiliary input.

Based on these quantities, Bennett, Gács, Li, Vitányi, and Zurek defined the
\emph{Information Distance}~\cite{bennett1998,li2008}:
\[
E(x,y) := \max\{K(x|y),\, K(y|x)\}.
\]
Up to logarithmic precision, this gives a universal metric for similarity.
In our framework, this “low-temperature” regime will reproduce $K(y|x)$ as reversible work.

\subsection{Solomonoff universal prior}

Solomonoff~\cite{solomonoff1964a,solomonoff1964b} introduced a universal distribution
that assigns to each string $x$ the probability
\[
M(x) := \sum_{p: U(p)=x} 2^{-|p|}.
\]
This Bayesian prior aggregates all programs producing $x$, 
weighting shorter ones more heavily. 
Unlike $K(x)$, which depends only on the shortest program,
$M(x)$ collects the entire ensemble.  
In our thermodynamic analogy, $M(x)$ corresponds to a partition function
evaluated at the critical temperature $\beta = \ln 2$.
This interpretation connects algorithmic probability with free-energy ideas
and underlies universal prediction frameworks such as AIXI~\cite{hutter2005}.

\subsection{Ugly Duckling theorem (Watanabe)\label{sec:UDT}}

Watanabe’s \emph{Ugly Duckling theorem}~\cite{watanabe1969,watanabe1985,watanabe1986}
states that if similarity is defined by counting shared predicates,
then every pair of distinct objects is equally similar.
Formally, in a universe $\Omega$, the number of predicates true of both $x$ and $y$ --- equivalently, the number of subsets $O \subseteq \Omega$ containing both --- is $2^{|\Omega|-2}$, independent of $x$ and $y$.
Thus, when all predicates are weighted equally, all pairs of objects become indiscriminable.
The theorem sharpened Goodman’s earlier philosophical claim---that induction depends
on which predicates are privileged~\cite{goodman1972}---into a combinatorial statement.

To connect this symmetry with algorithmic information theory, 
we reinterpret programs as predicates.
In the standard view, $U(p)=x$ means that program $p$ outputs $x$.
In our reformulation, $U(p)$ is seen as a set $O$ of admissible objects,
with $x\in U(p)$ interpreted as “$x$ satisfies predicate $p$”.
As illustrated in Fig.~\ref{fig:UTM_predicate}, this bridges Watanabe’s set-theoretic
formulation with the program-based language of algorithmic information theory.
It also reveals the limitation of the original theorem:
uniformly counting predicates erases distinctions,
but once predicates are weighted by description length,
shorter programs dominate.
This observation motivates the statistical-mechanical framework we develop,
where predicates are treated as programs and similarity arises from ensembles
weighted by their complexity.

\begin{figure}[t]
  \centering
  \includegraphics[width=0.6\linewidth]{./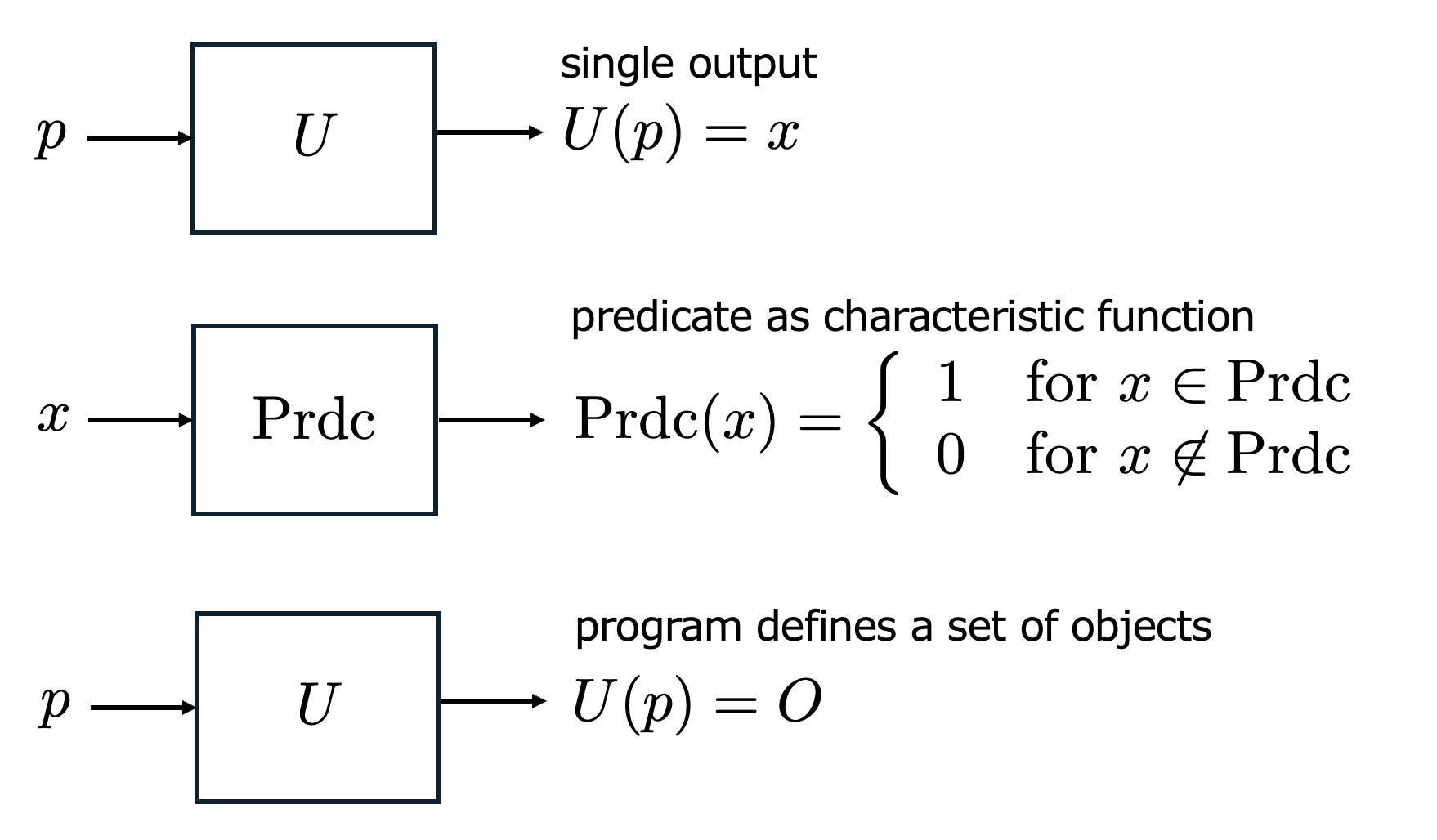}
  \caption{%
Illustration of our reinterpretation.
Top: a program outputs a single object.
Middle: a predicate acts as a characteristic function.
Bottom: in our framework, a program defines a set of objects, enabling predicates to be treated as programs.
}
  \label{fig:UTM_predicate}
\end{figure}
\section{Framework}
\subsection{Setup and notation}

As discussed in Sec.~\ref{sec:UDT}, Watanabe’s Ugly Duckling theorem shows that 
if all predicates (i.e.\ subsets of $\Omega$) are counted equally, 
then every pair of distinct objects is equally similar. 
Combinatorially, for a finite universe $\Omega$, 
the number of subsets containing both $x$ and $y$ is always $2^{|\Omega|-2}$, 
independent of the content of $x$ and $y$. 
This symmetry illustrates the arbitrariness of similarity definitions 
when no weighting scheme is imposed.

To overcome this limitation, we introduce a weighting scheme grounded in algorithmic information theory. 
Instead of assigning equal weight to all subsets $O\subseteq\Omega$, 
we weight each subset according to the length of a program that generates it. 
Shorter programs are assigned larger weights, 
so the ensemble is biased toward simpler predicates. 
This construction interpolates between the uniform view of the Ugly Duckling theorem 
and the minimal-description view of Kolmogorov complexity, 
and provides the basis for our statistical--mechanical framework.

\subsubsection{Objects}
The objects under consideration are elements of a finite set $\Omega$, 
where each element is regarded as a finite binary string (data).  
For later use we introduce the set
\begin{equation}
\mathbb{O}^{(-)} := \{\, O \mid O \subseteq \Omega,\; O \neq \emptyset \,\},
\end{equation}
namely the non-empty subsets of $\Omega$.  
Throughout, $x,y$ always denote elements of $\Omega$.  
When necessary we take $O \in \mathbb{O}^{(-)}$ and consider a program satisfying $U(p)=O$.  
For example, in $Z_\beta(x)$ the relevant subsets $O$ are those containing $x$, 
while in $Z_\beta(x,y)$ they are those containing both $x$ and $y$.

\subsubsection{Programs}
As in algorithmic information theory, we consider prefix-free programs $p$ executed on UTM $U$.  
When $U$ outputs a set $O \in \mathbb{O}^{(-)}$ on input $p$, we write $U(p)=O$.  
If $x\in O$, we write
\begin{equation}
U(p) \ni x.
\end{equation}
Similarly, if $x,y \in O$ we write
\begin{equation}
U(p) \ni (x,y).
\end{equation}
The length of a program $p$ is denoted by $|p|$.

\subsection{Thermodynamic setting}

Once predicates are reinterpreted as programs generating subsets $O \subseteq \Omega$, 
we can introduce a statistical--mechanical formalism. 
Partition functions and reversible work are defined in the standard sense of 
statistical mechanics~\cite{pathria2011,huang1987,callen1985}. 
The key idea is to regard the program length $|p|$ as energy, 
and to assign each program a Boltzmann weight $e^{-\beta |p|}$ at inverse temperature $\beta$. 
In this way, short programs dominate at low temperature, 
while at high temperature all programs contribute almost equally, 
recovering the uniform-counting symmetry of the Ugly Duckling theorem.

\subsubsection{Partition functions}
This analogy leads to the following definitions of partition functions and free energies:
\begin{equation}\label{eq:Z(x)}
Z_\beta(x) := \sum_{p: U(p)\ni x} e^{-\beta |p|}, 
\quad 
F_\beta(x) := -\beta^{-1}\ln Z_\beta(x),
\end{equation}
and for two objects $x,y \in \Omega$,
\begin{equation}\label{eq:Z(x,y)}
Z_\beta(x,y) := \sum_{p: U(p)\ni (x,y)} e^{-\beta |p|}, 
\quad 
F_\beta(x,y) := -\beta^{-1}\ln Z_\beta(x,y).
\end{equation}
Here $Z_\beta(x)$ and $Z_\beta(x,y)$ represent partition functions restricted to subsets that contain $x$ or $(x,y)$, respectively.

\subsubsection{Thermodynamic work}
We next consider a process that interpolates between the constraint ``$x$ is included'' and ``both $x$ and $y$ are included.’’  
Introducing a coupling parameter $\lambda \in [0,1]$, we define the energy of each program as
\begin{equation}
E_\lambda(p) = |p| + J \lambda \bigl(1-s_y(p)\bigr),
\end{equation}
where $s_y(p)=1$ if $U(p)\ni y$ and $s_y(p)=0$ otherwise.  
The constant $J$ specifies the strength of the constraint.  
Taking $J\to\infty$ enforces a \emph{hard constraint} at $\lambda=1$, 
so that only programs with $U(p)\ni (x,y)$ contribute.

The corresponding generalized partition function and free energy are
\begin{equation}
\hat{Z}_\beta(\lambda) := \sum_{p:U(p)\ni x} e^{-\beta E_\lambda(p)}, 
\quad 
\hat{F}_\beta(\lambda) := -\beta^{-1} \ln \hat{Z}_\beta(\lambda).
\end{equation}
In the limits we obtain
\begin{equation}
\hat{F}_\beta(0) = F_\beta(x), 
\quad 
\hat{F}_\beta(1) = F_\beta(x,y).
\end{equation}

For this constrained evolution, the generalized force is
\begin{equation}
\Phi_\lambda := \Big\langle \frac{\partial E_\lambda(p)}{\partial \lambda} \Big\rangle_\lambda 
= J\langle 1-s_y(p) \rangle_\lambda,
\end{equation}
and the reversible isothermal work required for the transformation is
\begin{equation}
W^{(\beta)}_{\mathrm{rev}}(x\to x,y) 
= \int_{0}^{1} \Phi_\lambda \, d\lambda
= F_\beta(x,y) - F_\beta(x).
\end{equation}
Thus, similarity is identified with the free energy difference needed to add $y$ to $x$.

\begin{figure}[t]
  \centering
  \includegraphics[width=0.8\linewidth]{./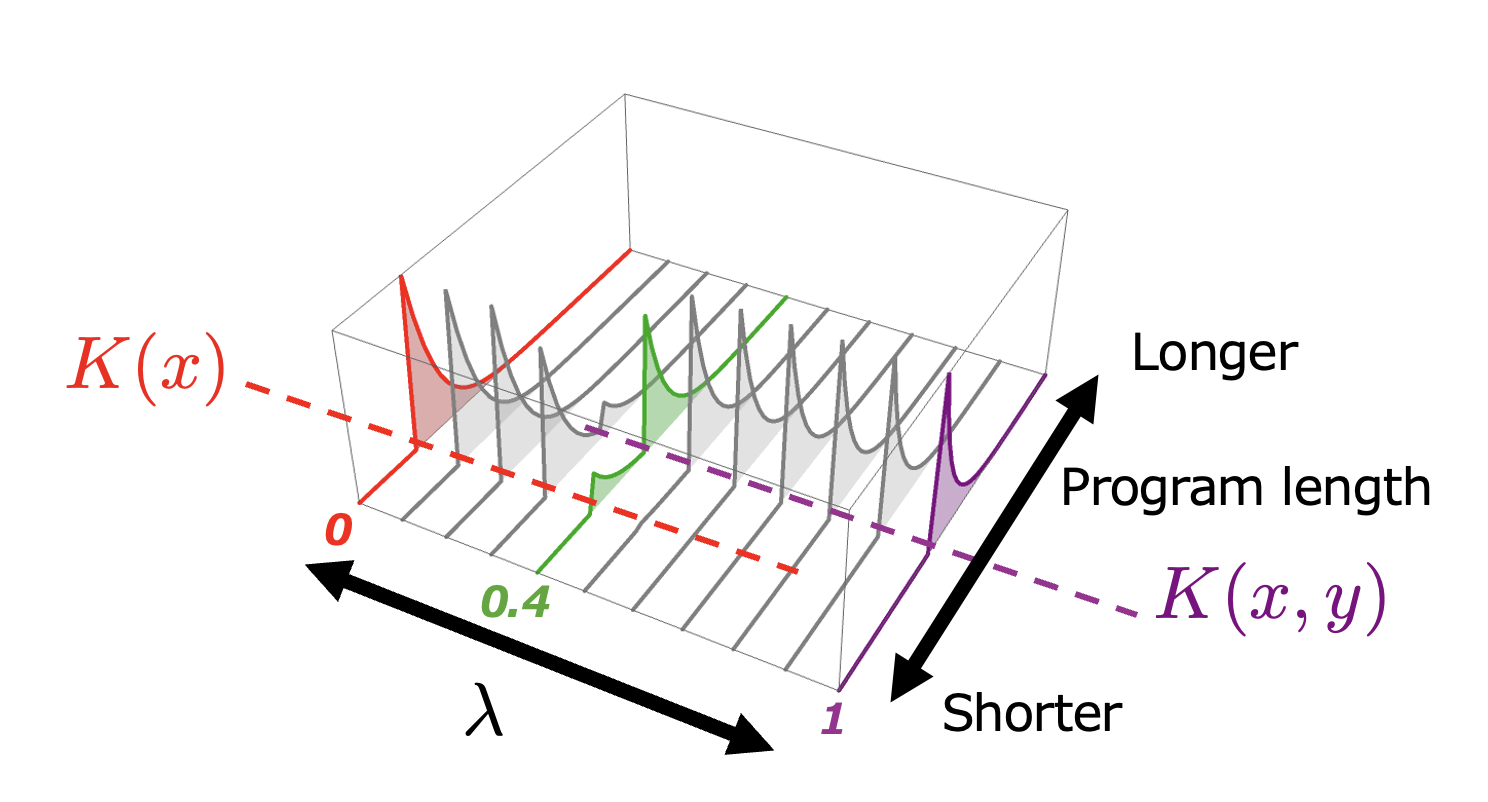}
  \caption{llustration of the $\lambda$-coupling process. The horizontal axis represents program length, the depth axis the coupling parameter $\lambda$, and the vertical peaks indicate weighted program counts. At $\lambda=0$ (red), programs producing $x$ dominate with minimal length $K(x)$. As $\lambda$ increases (e.g. $\lambda=0.4$, green), the ensemble gradually shifts. At $\lambda=1$ (purple), only programs producing both $x$ and $y$ remain, with minimal length $K(x,y)$. The number of shortest programs themselves does not change with $\lambda$; only their normalized weight in the ensemble is rebalanced.}
  \label{fig:lambda_coupling}
\end{figure}

This process is illustrated schematically in Fig.~\ref{fig:lambda_coupling}. 
As $\lambda$ increases, the ensemble shifts smoothly from programs describing $x$ alone 
to programs describing both $x$ and $y$. 
The reversible work quantifies this shift, and we adopt it as our definition of similarity. 
In the following sections we analyze this quantity in different temperature regimes, 
showing how classical notions such as Information Distance and the Ugly Duckling theorem 
emerge as limiting cases.

\section{Three temperature regimes}

\begin{figure}[t]
  \centering
  \includegraphics[width=0.75\linewidth]{./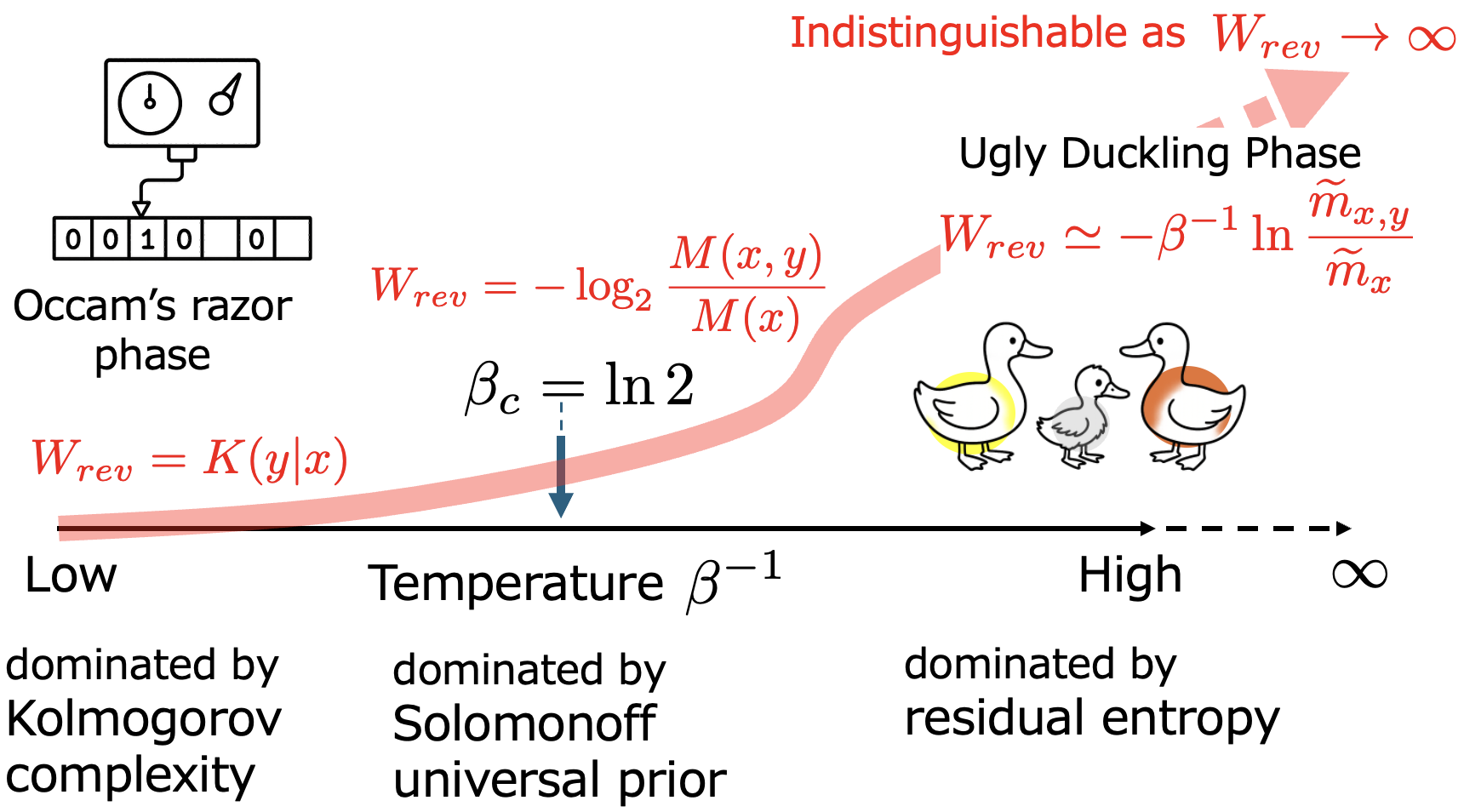}
  \caption{
Schematic phase diagram of similarity as thermodynamic work. 
At low temperature, the reversible work reduces to the conditional complexity $K(y|x)$, 
reproducing the Information Distance (\emph{Occam’s razor phase}). 
At the critical point $\beta=\ln 2$, it coincides with the Solomonoff universal prior. 
At high temperature, the divergent residual--entropy term dominates, 
recovering the essence of Watanabe’s Ugly Duckling theorem. 
Finite conditional complexity contributions $K(y|x)$ remain present but become negligible compared to the divergence.
  }
  \label{fig:temperature_regimes}
\end{figure}

The three regimes are summarized schematically in Fig.~\ref{fig:temperature_regimes}. 
In the following subsections, we analyze them in detail: 
the low--temperature limit corresponding to the Information Distance, 
the critical point $\beta=\ln 2$ yielding the Solomonoff universal prior, 
and the high--temperature limit where residual entropy dominates.

\subsection{Low-temperature limit ($\beta\to \infty$): Occam’s razor phase}
In the limit $\beta \to \infty$, the Boltzmann weight $e^{-\beta |p|}$ suppresses all but the shortest program.  
The partition functions then reduce to
\begin{equation}
Z_\beta(x) \simeq e^{-\beta K(x)}, 
\qquad
Z_\beta(x,y) \simeq e^{-\beta K(x,y)}.
\end{equation}
Consequently, the free energies become
\begin{equation}
F_\beta(x) \simeq K(x), 
\qquad 
F_\beta(x,y) \simeq K(x,y),
\end{equation}
up to $O(\beta^{-1})$ corrections.  
The reversible work is then
\begin{equation}
W^{(\beta\to\infty)}_{\mathrm{rev}}(x\to x,y) 
\simeq K(x,y)-K(x) = K(y|x).
\end{equation}

Thus in the low-temperature limit, reversible work coincides with the conditional Kolmogorov complexity and reproduces the \emph{Information Distance}
\[
D(x,y)=\max\{K(x|y),K(y|x)\}.
\]
This regime reflects the dominance of the shortest description, embodying \emph{Occam’s razor} in algorithmic form, and will be referred to as the ``\emph{Occam’s razor phase}.''

\subsection{Critical point ($\beta=\ln 2$): Bayesian bridge}\label{sec:solomonoff}
At $\beta=\ln 2$, the partition function $Z_{\ln 2}(x)$ coincides with the Solomonoff universal prior $M(x)$.  
This prior is defined as
\begin{equation}
M(x) = \sum_{p:U(p)\ni x} 2^{-|p|},
\end{equation}
a weighted sum over all programs that output $x$, each with weight $2^{-|p|}$.  
Since $U$ is prefix-free, Kraft’s inequality ensures
\[
\sum_{p} 2^{-|p|} \leq 1,
\]
so $M(x)$ is a valid semimeasure and can serve as a universal Bayesian prior over all computable models.

In this case, the reversible work becomes
\begin{equation}
W^{(\ln 2)}_{\mathrm{rev}}(x\to x,y)
= -\log_{2}\frac{M(x,y)}{M(x)}.
\end{equation}
Therefore our similarity measure is interpreted as the free energy cost of adding $y$ given $x$ under the universal prior, which information-theoretically corresponds to the amount of information required for a universal Bayesian update.  
For this reason, we refer to the critical regime as the ``\emph{Bayesian bridge},'' connecting algorithmic and probabilistic perspectives.

\subsection{High-temperature limit: Ugly Duckling phase}

In the opposite limit $\beta\to 0$, all programs contribute almost equally.
Let $N_\ell(x)$ be the number of programs of length $\ell$ with $U(p)\ni x$, and $N_\ell(x,y)$ the number producing $(x,y)$.
Then
\[
Z_\beta(x)\simeq \sum_\ell N_\ell(x),
\qquad
Z_\beta(x,y)\simeq \sum_\ell N_\ell(x,y).
\]
Since $N_\ell(x)\sim 2^\ell$ grows exponentially with $\ell$, these sums diverge for $\beta<\ln 2$, with the critical point $\beta=\ln 2$ marking the convergence boundary.

To regularize the divergence, we introduce an \emph{excess cutoff}: for each $x$, only programs up to $K(x)+\Lambda$ are included.
The corresponding reversible work is
\[
W_{\mathrm{rev}}^{(\Lambda,\beta)}(x\!\to\!x,y)
:=-\frac{1}{\beta}\log\frac{Z_\beta^{(\le K(x,y)+\Lambda)}(x,y)}{Z_\beta^{(\le K(x)+\Lambda)}(x)}.
\]

Under the assumptions of Appendix~\ref{appendix regular UTM}, the high-temperature expansion yields
\[
W_{\mathrm{rev}}^{(\Lambda,\beta)}(x\!\to\!x,y)
= K(y|x) - \beta^{-1}\ln\!\frac{\widetilde m_{x,y}^{(\Lambda)}}{\widetilde m_{x}^{(\Lambda)}} + o(1),
\qquad \beta\to 0,
\]
where the \emph{effective degeneracy}
\[
\widetilde m_o^{(\Lambda)} = m_o + \sum_{\Delta\ge 1}^\Lambda m_o^{(\Delta)}\,2^{-\alpha\Delta},\qquad\mbox{or}\qquad
\widetilde m_o = \lim_{\Lambda\to\infty }\widetilde m_o^{(\Lambda)} 
\]
denotes the effective multiplicity of admissible cores beyond trivial wrappers, with $m_o^{(\Delta)}$ denoting the number of alternative cores of length $K(o)+\Delta$ (see Appendix~\ref{appendix regular UTM} for a discussion of convergence).
Thus $\ln \widetilde m_o$ plays the role of a residual entropy.

The reversible work therefore separates into two contributions:
\begin{itemize}
\item A finite term $K(y|x)$, reflecting the \emph{depth} of description.
\item A divergent term $-{\beta}^{-1}\ln(\widetilde m_{x,y}^{(\Lambda)}/\widetilde m_{x}^{(\Lambda)})$, reflecting the \emph{diversity} of admissible cores.
\end{itemize}

In the strict $\beta\to 0$ limit, the divergent contribution overwhelms all finite terms:
\[
W_{\mathrm{rev}}^{(\Lambda,\beta)}(x\!\to\!x,y)\;\sim\;-{\beta}^{-1}\ln(\widetilde m_{x,y}^{(\Lambda)}/\widetilde m_{x}^{(\Lambda)}), \qquad \beta\to 0,
\]
so every finite difference such as $K(y|x)$ is suppressed.
This universal divergence is the algorithmic analogue of Watanabe’s Ugly Duckling theorem: at infinite temperature, all objects become indiscriminable.
Nevertheless, at small but finite $\beta$ the residual entropy ratios $\widetilde m_{x,y}^{(\Lambda)}/\widetilde m_{x}^{(\Lambda)}$ still differentiate objects, leaving traces of structural information.
In this sense, UDT appears as the $\beta=0$ endpoint, while the nearby high-temperature regime reveals a refined picture of diversity.

Finally, note that if descriptional complexity is ignored—that is, if all implementations are counted equally without weighting by length—our construction reduces exactly to Watanabe’s original setting where all predicates are equally weighted.
Retaining complexity refines this picture: the Ugly Duckling phase is still recovered at $\beta=0$, but finite $\beta$ corrections now encode object-dependent diversity through effective degeneracy ratios.

\section{Discussion and Conclusion}

We have presented a statistical--mechanical framework that unifies two seemingly contradictory notions of similarity: 
the depth emphasized by Kolmogorov complexity and the Information Distance, and the diversity emphasized by Watanabe’s Ugly Duckling theorem.  
By interpreting program length as energy and similarity as thermodynamic work, we showed that these two aspects are connected through a single temperature parameter.  
In the low--temperature limit, similarity reduces to conditional Kolmogorov complexity, while at high temperature it is dominated by effective degeneracy, reproducing the essence of the Ugly Duckling theorem.  
At the critical point $\beta=\ln 2$, the Solomonoff universal prior emerges as the universal Bayesian prior.
In particular, our analysis clarifies the relation to Watanabe’s theorem.
If descriptional complexity is ignored altogether—that is, if every admissible implementation is counted equally without weighting by program length—our construction reduces exactly to Watanabe’s original setting where all predicates are equally weighted.
In contrast, keeping length weighting but taking the thermodynamic limit $\beta \to 0$ produces a universal divergence that likewise erases all finite differences, again recovering the indiscriminability of the Ugly Duckling theorem.
The two perspectives are not identical: the first modifies the weighting scheme, while the second arises as a limiting case.
Yet both capture the same essential symmetry, and our framework embeds them within a single parameterized picture.
Retaining complexity, however, refines this picture: the strict $\beta=0$ endpoint reproduces indiscriminability, while at small finite $\beta$ residual entropy ratios $\ln(\widetilde m_{x,y}/\widetilde m_{x})$ survive and provide additional structural information.
Thus our construction not only recovers Watanabe’s result in the complexity-free case and at $\beta=0$, but also extends both perspectives into a more nuanced setting where depth and diversity jointly contribute.

A central novelty of this work lies in clarifying that *not all program multiplicities are equal*.  
Wrappers, which correspond to trivial syntactic redundancy, proliferate exponentially but uniformly across all objects.  
As such, they contribute no distinguishing information.  
In contrast, the multiplicity of distinct *cores*---fundamentally different computational routes to the same object---remains object--dependent.  
Crucially, distinct cores are not compressible into a single shortest form: each represents an essentially different constructive pathway.  
By separating wrapper redundancy from core diversity through the notion of \emph{regular UTMs}, we defined the \emph{effective degeneracy} $\widetilde m_o$, which aggregates only the meaningful multiplicities.  
Its logarithm plays the role of a residual entropy, complementing description length.  
This distinction shows that similarity naturally decomposes into two contributions:  
the depth of minimal description, $K(y|x)$, and the diversity of admissible cores, $\ln(\widetilde m_{x,y}/\widetilde m_{x})$.

\begin{figure}[t]
  \centering
  \includegraphics[width=0.7\linewidth]{./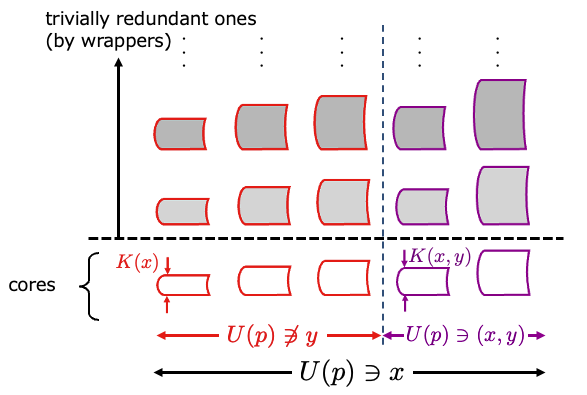}
\caption{%
Instruction--stack analogy for programs.  
The lowest block represents the shortest possible instruction implementing $x$ or $(x,y)$, 
with thickness equal to $K(x)$ or $K(x,y)$.  
Additional layers correspond to full instruction variants obtained by adding redundant wrappers 
on top of the same core.  
Such wrappers proliferate uniformly across all objects and therefore cancel out in the thermodynamic 
formulation, leaving only the multiplicity of distinct cores as the residual entropic contribution.%
}
  \label{fig:instruction_stack}
\end{figure}

The analogy illustrated in Fig.~\ref{fig:instruction_stack}  highlights that similarity goes beyond what is captured by minimal description length alone.
Traditional complexity--based approaches compare only the thickness of the thinnest sheet. 
Our framework instead emphasizes that tasks are not determined solely by one minimal instruction, but also by the variety of distinct instructions that can realize them. 
This residual diversity---captured by effective degeneracy---aligns more closely with the everyday intuition 
that two tasks can feel similar not only when their minimal descriptions are close, 
but also when they admit a wide range of interchangeable realizations.

To our knowledge, this is the first framework that systematically isolates \emph{core diversity} (as opposed to trivial wrapper redundancy) and formalizes its contribution to similarity.  
Algorithmic information theory has traditionally focused almost exclusively on description length and shortest programs, with little attention to multiplicities.  
The present refinement therefore extends AIT by highlighting core diversity as a genuine and irreducible contributor to similarity.  
In particular, the high--temperature regime is governed not by minimal descriptions but by residual entropy differences, with effective degeneracy ratios determining whether objects appear distinguishable or indiscriminable.  

It should be stressed, however, that effective degeneracy $\widetilde m_o$ cannot be computed exactly.  
This limitation follows directly from the uncomputability of Kolmogorov complexity: if $K(x)$ itself is not effectively calculable, then enumerating all cores and their multiplicities is infeasible.  
Moreover, degeneracy depends on the chosen UTM, and the number of programs grows exponentially with length, further constraining explicit evaluation.  
Nevertheless, just as residual entropy in statistical mechanics captures structural properties of ground states that cannot be enumerated microscopically~\cite{pathria2011}, effective degeneracy provides a principled way to represent the contribution of diversity to similarity.  
Approximate estimates may be obtained in practice through surrogate methods such as compression--based approximations, minimum description length principles, or bounded searches within finite cutoffs.  
In this sense, the limitations of effective degeneracy mirror those of Kolmogorov complexity itself: both are uncomputable in the strict sense and machine--dependent, yet both provide robust and widely accepted theoretical foundations.

In addition, effective degeneracy is genuinely dependent on the choice of regular UTM. Unlike Kolmogorov complexity, where UTM dependence is confined to an additive constant, $\widetilde m_o$ can vary structurally across machines, since different UTMs expose different families of admissible cores. This dependence should not be viewed as a drawback: rather, it reflects the natural fact that similarity is defined relative to a measuring device—in this case, the universal machine. Once trivial wrappers are factored out, the contribution of core diversity is stable within any fixed regular UTM, providing a consistent and interpretable measure of diversity.

Beyond its theoretical contribution, our framework suggests possible implications for model selection and learning theory, where weighting schemes and residual entropy differences naturally arise.  
The degeneracy contribution can be interpreted as a measure of ensemble diversity, complementing the description--length term that implements Occam’s razor.  
This perspective offers a principled way to balance depth and diversity in practical learning scenarios.  

Empirical similarity measures also fall into place within this perspective.  
For example, the \emph{Normalized Google Distance} (NGD)~\cite{cilibrasi2007} and the 
\emph{Normalized Compression Distance} (NCD)~\cite{li2004}  
have traditionally been regarded as practical approximations to the Information Distance.  
Within our framework, however, they acquire a natural reinterpretation.  
NGD relies on frequency counts $f(x)$ and $f(x,y)$ obtained from web search engines.  
These counts effectively represent the number of distinct contexts in which $x$ and $y$ appear, and can therefore be viewed as a practical proxy for effective degeneracy.  
In thermodynamic terms, NGD corresponds to a diversity--dominated, finite-- or high--temperature regime.  
By contrast, NCD, defined via general--purpose compressors, is more naturally viewed as a low--temperature proxy: shortest descriptions dominate, and compression length serves as a practical surrogate for Kolmogorov complexity.  
Thus NGD and NCD, while both approximations to the Information Distance, can be reinterpreted as complementary endpoints of the same thermodynamic axis:  
NGD reflecting diversity at high temperature and NCD reflecting depth at low temperature.  
The practical acceptance of NGD as a measure of similarity suggests that capturing similarity through effective degeneracy ratios is indeed meaningful, providing empirical support for the theoretical framework developed here.

\medskip
Beyond these technical aspects, the thermodynamic reformulation itself has several conceptual advantages: 
\begin{itemize}
  \item It places two apparently contradictory perspectives---depth and diversity---on a single temperature axis, making their relation transparent.  
  \item It expresses similarity as a free--energy difference, a universal physical quantity, thereby providing a common ground for otherwise disparate notions.  
  \item It enables the use of tools from statistical mechanics, such as Jarzynski--type relations and phase--transition analogies, to study similarity.  
  \item It offers an intuitive picture, where low temperature corresponds to Occam’s razor (minimal descriptions) and high temperature to the indiscriminability of the Ugly Duckling theorem.  
   \item The statistical-mechanical formulation also suggests natural extensions beyond the classical setting, potentially including quantum ensembles.
\end{itemize}
This combination of mathematical rigor and physical intuition may help broaden the conceptual reach of algorithmic information theory and its applications.

Finally, our approach is conceptually related to the line of work known as ``algorithmic thermodynamics,'' 
where program length is treated as energy and partition functions over programs are analyzed.  
That literature has primarily focused on the thermodynamic interpretation of Kolmogorov complexity and universal distributions, 
such as using free energy to describe compression limits.  
Our contribution differs in that we explicitly apply the thermodynamic formalism to the problem of similarity, 
showing how two seemingly opposed notions---the Information Distance (depth) and the Ugly Duckling theorem (diversity)---emerge as limiting phases within a single statistical--mechanical framework.  
In particular, the identification of \emph{effective degeneracy}, purified of wrapper redundancy, 
as the dominant driver of similarity in the high--temperature limit appears to be new.  
This highlights how the algorithmic--thermodynamic perspective can reveal structural insights 
that are not visible in either combinatorial or information--theoretic treatments alone.  

Metaphorically, the tension between depth and diversity recalls a contrast in art: classical traditions aim for faithful compression of form, while impressionism values the richness of many perspectives. This analogy, albeit informal, conveys the intuitive balance between compactness and multiplicity that our framework unifies.
\section*{Acknowledgments}
I am grateful to my colleagues at AIST for inspiring the initial motivation for this work and for their various forms of support.
I also acknowledge the use of OpenAI’s ChatGPT-5 in checking logical consistency and improving the English presentation of the manuscript.

\appendix
\section*{Appendix: Rigorous results and non-equilibrium extensions}
This appendix provides rigorous justification and computational extensions of the results presented in the main text. 
Appendix~\ref{appendix regular UTM} gives a high--temperature expansion under a regular UTM and the 
\emph{excess cutoff} scheme (also referred to as a surplus cutoff), 
leading to a precise formulation of the reversible work. 
Appendix~\ref{appendix jarzynski} outlines non-equilibrium extensions based on the Jarzynski equality 
(see e.g.~\cite{jarzynski1997,crooks1999,jarzynski2011}), including both theoretical formulation 
and a sketch of numerical implementation.

\section{High--temperature limit under regular UTM and excess cutoff\label{appendix regular UTM}}

\subsection{Preliminaries}

For an object $o$, define
\[
N_\ell(o):=\bigl|\{\,p:\,|p|=\ell,\ U(p)\ni o\,\}\bigr|,
\]
the number of programs of length $\ell$ generating $o$.  
The \emph{ground length} $K(o)$ is the minimal $\ell$ with $N_\ell(o)>0$.

\subsection{Regular UTMs and wrapper construction}

\begin{definition}[Regular UTM]
A \emph{regular Universal Turing Machine (UTM)} is a prefix-free machine $U$
such that every program $p$ admits a unique decomposition $p=w\Vert q$ into
\begin{itemize}
\item a \emph{wrapper} $w$ drawn from a syntactically defined family $\mathcal W$, and
\item a \emph{core code} $q$ executed by a fixed reference machine $U_0$, i.e.\ $U(w\Vert q)=U_0(q)$.
\end{itemize}
The wrapper set acts uniformly and independently of the output.
Let $a_d:=|\{w\in\mathcal W:\ |w|=d\}|$ denote the number of wrappers of length $d$.
\end{definition}

\paragraph{Explicit construction.}
Fix a marker $M$ of length $h\ge 1$.  
Define
\[
\mathcal W_d := \{\, w=sM:\ |w|=d,\ \text{$s$ contains no occurrence of $M$}\,\}.
\]
That is, a wrapper is any binary string that ends with $M$ and whose prefix $s$ is $M$-free.
This guarantees a unique boundary between $w$ and the core $q$: in any concatenation
$p=w\Vert q$, the \emph{first (leftmost) occurrence} of $M$ in $p$ is precisely the end of $w$.
After that boundary, the core $q$ may contain $M$ arbitrarily; parsing remains unambiguous
because no occurrence of $M$ appears \emph{before} the boundary inside $w$.

The set $\{\,s:\ |s|=d-h,\ s\ \text{$M$-free}\,\}$ is a regular language recognized by a finite automaton,
hence its cardinality grows exponentially. Therefore there exist constants $c,\alpha>0$ such that
\[
a_d := |\mathcal W_d| \asymp c\,2^{\alpha d}\qquad(d\to\infty).
\]
We also include the empty wrapper $w=\epsilon$ so that shortest programs ($d=0$) are available.

\subsection{Effective degeneracy}

Let $m_o$ denote the number of shortest cores of length $K(o)$ on $U_0$.  
In addition, there may exist \emph{alternative cores} of length $K(o)+\Delta$ for integers $\Delta\ge 1$.  
Let $m_o^{(\Delta)}$ be their multiplicity.  
Such alternatives include, in particular, the case of zero-length wrappers (pure cores of length $K(o)+\Delta$).  
At total length $K(o)+d$ the decomposition gives
\begin{equation}\label{eq:split}
N_{K(o)+d}(o)
= m_o\,a_d + \sum_{\Delta=1}^{d-1} m_o^{(\Delta)}\,a_{d-\Delta} + E_d(o),
\end{equation}
where $E_d(o)$ collects boundary effects.
These include, for example:
\begin{itemize}
\item cases where an alternative core of length $K(o)+\Delta$ cannot contribute because $d<\Delta$ (negative wrapper length),
\item small-$d$ deviations where the ratio $a_{d-\Delta}/a_d$ has not yet converged to its exponential asymptotic $2^{-\alpha\Delta}$,
\item or finitely many exceptional assignments in the prefix-free code of $U_0$ that break the uniform pattern at very short lengths.
\end{itemize}
All such contributions occur only at finitely many $d$ or remain uniformly bounded,
hence they satisfy $E_d(o)=o(a_d)$ as $d\to\infty$.

For wrapper families with exponential growth $a_d\asymp c\,2^{\alpha d}$ ($0<\alpha<1$), the ratio
\[
\frac{a_{d-\Delta}}{a_d}\;\to\;2^{-\alpha\Delta}\quad(d\to\infty).
\]
Hence the contribution of an alternative core is asymptotically a fixed constant fraction $2^{-\alpha\Delta}$ of the main family.

\begin{definition}[Effective degeneracy]
The \emph{effective degeneracy} of object $o$ is
\[
\widetilde m_o^{(d)} := m_o + \sum_{\Delta\ge 1}^{d-1} m_o^{(\Delta)}\,2^{-\alpha\Delta}.
\]
We define
\[
\widetilde m_o := \lim_{d\to\infty}\widetilde m_o^{(d)},
\]
which is finite whenever the alternative families are at most subexponential
(as ensured by prefix-freeness of $U_0$ restricted to output $o$).

Although $\widetilde m_o$ is formally defined by summing over all alternative cores,
in the high--temperature analysis with an excess cutoff only finitely many near--shortest
alternatives contribute appreciably. Longer alternatives are both exponentially suppressed
(by the factor $2^{-\alpha\Delta}$) and excluded by the cutoff window, so $\widetilde m_o$
is effectively determined by a finite neighborhood around the ground length.
Thus $\ln \widetilde m_o$ may be regarded as a \emph{residual entropy} associated with $o$, 
paralleling ground-state degeneracy in statistical mechanics.
\end{definition}

\subsection{Asymptotics of Program Counts}

\begin{proposition}\label{prop:residual}
For a regular UTM with exponential wrapper growth $a_d\asymp c\,2^{\alpha d}$,
the program count admits the representation
\[
N_{K(o)+d}(o) = \widetilde m_o\,a_d\,(1+r_o(d)),\qquad r_o(d)\to 0\ (d\to\infty).
\]
\end{proposition}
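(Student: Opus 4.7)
The plan is to start from the decomposition \eqref{eq:split}, factor out the dominant wrapper count $a_d$, and then pass to the limit $d\to\infty$ term by term using dominated-convergence-style control. Concretely, I would divide both sides of \eqref{eq:split} by $a_d$ to obtain
\[
\frac{N_{K(o)+d}(o)}{a_d} \;=\; m_o \;+\; \sum_{\Delta=1}^{d-1} m_o^{(\Delta)}\,\frac{a_{d-\Delta}}{a_d} \;+\; \frac{E_d(o)}{a_d},
\]
and then show the right-hand side tends to $\widetilde m_o$. The boundary term vanishes immediately because $E_d(o)=o(a_d)$ by construction, giving $E_d(o)/a_d\to 0$. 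The main work is therefore in handling the sum.

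For the sum I would use the asymptotic $a_{d-\Delta}/a_d \to 2^{-\alpha\Delta}$ as $d\to\infty$, valid for each fixed $\Delta$ because $a_d\asymp c\,2^{\alpha d}$. To justify exchanging limit and summation, I would first split the sum at an arbitrary finite truncation $K$:
\[
\sum_{\Delta=1}^{d-1} m_o^{(\Delta)}\,\frac{a_{d-\Delta}}{a_d}
\;=\;
\underbrace{\sum_{\Delta=1}^{K} m_o^{(\Delta)}\,\frac{a_{d-\Delta}}{a_d}}_{\text{head}}
\;+\;
\underbrace{\sum_{\Delta=K+1}^{d-1} m_o^{(\Delta)}\,\frac{a_{d-\Delta}}{a_d}}_{\text{tail}}.
\]
The head converges as $d\to\infty$ to $\sum_{\Delta=1}^{K} m_o^{(\Delta)}2^{-\alpha\Delta}$ by the finite-range pointwise convergence of the ratios. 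For the tail, the asymptotic $a_d\asymp c\,2^{\alpha d}$ gives a uniform bound $a_{d-\Delta}/a_d \le C\,2^{-\alpha\Delta}$ for some constant $C$ and all sufficiently large $d$, so the tail is bounded by $C\sum_{\Delta>K} m_o^{(\Delta)}\,2^{-\alpha\Delta}$, independent of $d$. Sending first $d\to\infty$ and then $K\to\infty$, and invoking the convergence of $\widetilde m_o$ (guaranteed, as noted in the definition, by the at-most-subexponential growth of $m_o^{(\Delta)}$ enforced by prefix-freeness of $U_0$ restricted to output $o$), the tail can be made arbitrarily small. This yields $N_{K(o)+d}(o)/a_d \to \widetilde m_o$, which is equivalent to the claimed representation with $r_o(d)\to 0$.

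The main obstacle is precisely the exchange of limit and infinite summation, because the upper limit of the sum itself depends on $d$ and the summands involve both the sequences $m_o^{(\Delta)}$ and the $d$-dependent ratios $a_{d-\Delta}/a_d$. The argument succeeds only because two facts combine cleanly: a uniform geometric envelope $a_{d-\Delta}/a_d \lesssim 2^{-\alpha\Delta}$ coming from the regular-wrapper growth law, and summability of $\sum m_o^{(\Delta)}2^{-\alpha\Delta}$ coming from the prefix-free structure of $U_0$. Once these are in place, the proof is essentially a dominated-convergence argument on $\mathbb{N}$, and the remaining bookkeeping (extracting the $(1+r_o(d))$ form from the convergence $N_{K(o)+d}(o)/a_d \to \widetilde m_o > 0$) is routine.
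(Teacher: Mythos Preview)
Your proposal is correct and follows essentially the same route as the paper's own sketch: divide \eqref{eq:split} by $a_d$, use $E_d(o)=o(a_d)$ for the boundary term, handle the pointwise convergence $a_{d-\Delta}/a_d\to 2^{-\alpha\Delta}$ on a finite head, and control the tail via a uniform geometric envelope together with the summability of $\sum_\Delta m_o^{(\Delta)}2^{-\alpha\Delta}$. Your write-up is simply a more explicit, dominated-convergence flavored fleshing-out of exactly the head/tail split the paper gestures at with the phrase ``the tail $\sum_{\Delta>D} m_o^{(\Delta)}a_{d-\Delta}/a_d$ vanishes uniformly as $D\to\infty$.''
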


\begin{proof}[Sketch]
Divide \eqref{eq:split} by $a_d$:
\[
\frac{N_{K(o)+d}(o)}{a_d}
= m_o + \sum_{\Delta\ge 1} m_o^{(\Delta)}\,\frac{a_{d-\Delta}}{a_d} + \frac{E_d(o)}{a_d}.
\]
For each fixed $\Delta$, $\tfrac{a_{d-\Delta}}{a_d}\to 2^{-\alpha\Delta}$.  
Thus the finite sum of nearby alternatives converges to
$\sum_{\Delta} m_o^{(\Delta)}2^{-\alpha\Delta}$.  
If alternatives exist for infinitely many $\Delta$, sparsity (subexponential growth)
ensures the tail $\sum_{\Delta>D} m_o^{(\Delta)}a_{d-\Delta}/a_d$ vanishes uniformly as $D\to\infty$.  
Hence
\[
\frac{N_{K(o)+d}(o)}{a_d}\;=\;\widetilde m_o+o(1),
\]
which is equivalent to $N_{K(o)+d}(o)=\widetilde m_o\,a_d(1+r_o(d))$ with $r_o(d)\to 0$.
\end{proof}

\subsection{Partition Functions with Excess Cutoff}

For an excess window $\Lambda\ge 0$, define
\begin{align}
Z_\beta^{(\le K(x)+\Lambda)}(x)
&=e^{-\beta K(x)}\sum_{d=0}^{\Lambda}N_{K(x)+d}(x)\,e^{-\beta d},\\
Z_\beta^{(\le K(x,y)+\Lambda)}(x,y)
&=e^{-\beta K(x,y)}\sum_{d=0}^{\Lambda}N_{K(x,y)+d}(x,y)\,e^{-\beta d}.
\end{align}
By Proposition~\ref{prop:residual},
\begin{align}
Z_\beta^{(\le K(x)+\Lambda)}(x)
&=e^{-\beta K(x)}\,\widetilde m_x^{(\Lambda)} \sum_{d=0}^{\Lambda} a_d(1+r_x(d))e^{-\beta d},\\
Z_\beta^{(\le K(x,y)+\Lambda)}(x,y)
&=e^{-\beta K(x,y)}\,\widetilde m_{x,y}^{(\Lambda)} \sum_{d=0}^{\Lambda} a_d(1+r_{x,y}(d))e^{-\beta d}.
\end{align}
Hence the ratio is
\begin{equation}\label{eq:ratio-core}
\frac{Z_\beta^{(\le K(x,y)+\Lambda)}(x,y)}{Z_\beta^{(\le K(x)+\Lambda)}(x)}
= e^{-\beta(K(x,y)-K(x))}~ \frac{\widetilde m_{x,y}^{(\Lambda)}}{\widetilde m_x^{(\Lambda)}}
~ \frac{\sum_{d=0}^{\Lambda}a_d(1+r_{x,y}(d))e^{-\beta d}}
{\sum_{d=0}^{\Lambda}a_d(1+r_x(d))e^{-\beta d}}.
\end{equation}

\subsection{High-Temperature Expansion}

For fixed $\Lambda$ and $\beta\to 0$,
\[
\sum_{d=0}^{\Lambda}a_d(1+r_o(d))e^{-\beta d}
=\Bigl(\sum_{d=0}^{\Lambda}a_d\Bigr)\,[1+O(\varepsilon(\Lambda))+O(\beta\Lambda)],
\]
where $\varepsilon(\Lambda):=\max\{|\varepsilon_x(\Lambda)|,|\varepsilon_{x,y}(\Lambda)|\}$ and
$\varepsilon(\Lambda)\to 0$ by Proposition~\ref{prop:residual}.  
Thus \eqref{eq:ratio-core} becomes
\[
\frac{Z_\beta^{(\le K(x,y)+\Lambda)}(x,y)}{Z_\beta^{(\le K(x)+\Lambda)}(x)}
= e^{-\beta(K(x,y)-K(x))}\cdot \frac{\widetilde m_{x,y}^{(\Lambda)}}{\widetilde m_x^{(\Lambda)}}\,[1+O(\varepsilon(\Lambda))+O(\beta\Lambda)].
\]

Taking $\Lambda\to\infty$ under the scaling $\beta\Lambda\to 0$, $\varepsilon(\Lambda)\to 0$, the error vanishes.

\subsection{Reversible Work at High Temperature}

Finally,
\begin{align}
W_{\mathrm{rev}}^{(\Lambda,\beta)}(x\!\to\!x,y)
&=-{\beta}^{-1}\log\frac{Z_\beta^{(\le K(x,y)+\Lambda)}(x,y)}{Z_\beta^{(\le K(x)+\Lambda)}(x)}\\
&=[K(x,y)-K(x)] - {\beta}^{-1}\ln\frac{\widetilde m_{x,y}^{(\Lambda)}}{\widetilde m_{x}^{(\Lambda)}} + o(1).
\end{align}
Thus
\[
W_{\mathrm{rev}}^{(\Lambda,\beta)}(x\!\to\!x,y)
=K(y|x) - {\beta}^{-1}\ln\frac{\widetilde m_{x,y}^{(\Lambda)}}{\widetilde m_{x}^{(\Lambda)}}+o(1),
\]
with the interpretation that $K(y|x)$ is the finite \emph{depth} contribution, while
$\ln(\widetilde m_{x,y}^{(\Lambda)}/\widetilde m_{x}^{(\Lambda)})$ is the divergent \emph{diversity} or residual entropy contribution.

\subsection{Cutoff Schemes: Absolute vs Excess}
An alternative cutoff restricts the absolute program length, e.g.\ $|p|\le L$.
While natural and simple, this treats different objects asymmetrically:
the relative position of $L$ to the ground length $K(x)$ may vary greatly,
biasing comparisons across objects.
By contrast, the excess cutoff normalizes each object by its own $K(x)$
and examines only the surplus window $\Lambda$.
This places all objects on equal footing and allows the degeneracy ratio
to emerge cleanly as the dominant contribution in the high--temperature limit.

\subsection{Supplementary Note: Growth of wrapper families}

For completeness we justify the claim that the wrapper family defined by a fixed marker $M$ 
has exponential growth.

\begin{proposition}
Fix a marker $M$ of length $h\ge 1$.  
Let $\mathcal A_n(M)$ denote the set of binary strings of length $n$ that do not contain $M$ as a substring.  
Then $|\mathcal A_n(M)|=C\,\mu^n+o(\mu^n)$ as $n\to\infty$, 
where $\mu\in[1,2)$ is the spectral radius of the transition matrix of the $M$-avoidance automaton, 
and $C>0$ depends on $M$.
\end{proposition}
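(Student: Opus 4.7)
The plan is to realize $\mathcal A_n(M)$ as the set of length-$n$ walks in a deterministic finite automaton (DFA) on $h$ states and then extract the exponential growth via Perron--Frobenius theory for non-negative matrices.

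First, I would construct the \emph{Knuth--Morris--Pratt-style avoidance DFA} with live states $q_0,q_1,\dots,q_{h-1}$, where $q_i$ records the length of the longest suffix of the input read so far that coincides with a prefix of $M$. A single absorbing dead state is entered upon completing $M$. For each live state $q_i$ and letter $b\in\{0,1\}$, the transition is to $q_{i+1}$ if $b$ equals the $(i{+}1)$-th letter of $M$ (with exit to the dead state if this completes $M$), and otherwise to the failure state prescribed by the KMP failure function. Letting $A\in\{0,1,2\}^{h\times h}$ be the transition-count matrix on live states,
\[
|\mathcal A_n(M)| \;=\; \mathbf{1}^\top A^n\,\mathbf{e}_{q_0},
\]
since the automaton starts in $q_0$ and a length-$n$ input lies in $\mathcal A_n(M)$ iff its run stays in the live states.

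Second, I would verify that $A$ is irreducible and aperiodic. For \emph{irreducibility}, note that from any live state $q_i$, reading the letter $\bar M_1:=1-M_1$ returns the automaton to $q_0$, because the new input ends in $\bar M_1\ne M_1$ so no non-empty prefix of $M$ can be a suffix; conversely, from $q_0$ one reaches $q_j$ by reading the first $j$ letters of $M$, so every pair of live states communicates. For \emph{aperiodicity}, $q_0$ has a self-loop (reading $\bar M_1$ keeps it in place), so the gcd of cycle lengths through $q_0$ is $1$. Perron--Frobenius then supplies a simple dominant eigenvalue $\mu>0$ with strictly positive left and right eigenvectors $u^\top$, $v$, together with the power asymptotic
\[
A^n \;=\; \mu^n\,v\,u^\top \;+\; O\bigl((\mu')^n\bigr),\qquad \mu'<\mu.
\]
Contracting against $\mathbf{1}$ and $\mathbf{e}_{q_0}$ yields $|\mathcal A_n(M)|=C\mu^n+o(\mu^n)$ with $C=(\mathbf{1}^\top v)(u^\top\mathbf{e}_{q_0})>0$, since all coordinates of $u$ and $v$ are strictly positive.

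Third, I would derive the bounds $1\le\mu<2$. The lower bound follows because the constant sequence $\bar M_1\bar M_1\cdots$ avoids $M$ for every $n$, so $|\mathcal A_n(M)|\ge 1$ and hence $\mu\ge 1$. For the strict upper bound, the row of $A$ indexed by $q_{h-1}$ has row sum $1$ (one letter completes $M$ and exits to the dead state), while every other row has sum $2$; a standard result on non-negative irreducible matrices asserts that when row sums are not all equal the spectral radius is strictly less than the maximum row sum, giving $\mu<2$.

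The main obstacle is not in any single step but in ensuring the uniformity of the Perron--Frobenius conclusion across every admissible marker: one must check that the avoidance DFA satisfies the irreducibility hypothesis for \emph{every} non-empty $M$, and that $q_0$ genuinely lies in the essential communicating class so the prefactor $C$ is strictly positive. The $\bar M_1$-reset argument handles this uniformly, but edge cases---very short $M$ or degenerate patterns such as $M=b^h$---should be inspected briefly to confirm that the same strong-connectivity and aperiodicity conclusions hold without modification.
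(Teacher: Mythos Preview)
Your approach coincides with the paper's sketch: build the $M$-avoidance DFA, write $|\mathcal A_n(M)|$ as a bilinear form in $A^n$, and invoke Perron--Frobenius. You go further than the paper by attempting to verify irreducibility and aperiodicity and by justifying $1\le\mu<2$, all of which the paper simply asserts.

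However, your irreducibility step contains an error. You claim that from any live state $q_i$, reading $\bar M_1$ resets to $q_0$ because ``the new input ends in $\bar M_1\ne M_1$ so no non-empty prefix of $M$ can be a suffix.'' This only excludes the length-$1$ prefix: for a prefix $M_1\cdots M_j$ to be a suffix of the input it is $M_j$, not $M_1$, that must equal the last symbol $\bar M_1$, and that is perfectly possible for $j\ge 2$. For instance with $M=0110$ one has $\bar M_1=1$, and from $q_2$ reading $1$ matches $M_3$ and advances to $q_3$, not $q_0$. More seriously, for $M=0^{h-1}1$ the state $q_{h-1}$ has a self-loop on $0$ and exits to the dead state on $1$, so $q_{h-1}$ cannot reach $q_0$ at all and $A$ is genuinely reducible. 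Your row-sum argument for $\mu<2$ also rests on irreducibility and is therefore not yet justified in such cases; the edge cases you flag ($M=b^h$) are in fact fine, while $M=b^{h-1}\bar b$ is the one that breaks.

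The proposition remains true, but the repair requires observing that $q_0$ always lies in an irreducible aperiodic block whose spectral radius strictly dominates any sink component, and that paths from $q_0$ escaping into a lower-radius sink still contribute only $O(\mu^n)$. The paper's sketch invokes Perron--Frobenius without checking hypotheses and so shares this lacuna at the level of rigor; but since you explicitly assert irreducibility and offer an argument for it, the counterexample lands squarely on your proof.
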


\begin{proof}[Sketch]
The set $\mathcal A_n(M)$ is a regular language recognized by a finite automaton
constructed from the prefix function of $M$ (e.g.\ the Aho-Corasick automaton).  
Let $A$ be the adjacency matrix of the automaton restricted to safe states.
Then
\[
|\mathcal A_n(M)| = \boldsymbol e_0^\top A^n \boldsymbol 1
\]
where $\boldsymbol e_0$ is the unit vector corresponding to the initial automaton state, and $\boldsymbol 1$ is the all-ones vector selecting all accepting states.
In other words, $\boldsymbol e_0^\top A^n \boldsymbol 1$ counts the number of $n$-step paths of the automaton that start in the initial state and end in any safe state, which is exactly the number of length-$n$ binary strings that avoid $M$.
By Perron-Frobenius, $A$ has a dominant eigenvalue $\mu=\rho(A)$ with positive eigenvectors,
so $|\mathcal A_n(M)|=C\,\mu^n+o(\mu^n)$ with $C>0$.
\end{proof}

As a consequence, the wrapper counts satisfy
\[
a_d = |\mathcal W_d| = |\mathcal A_{d-h}(M)| 
= c\,2^{\alpha d}\,[1+o(1)], \qquad \alpha=\log_2 \mu \in [0,1).
\]
Thus there exist constants $c,\alpha>0$ such that $a_d\asymp c\,2^{\alpha d}$, as used in the main text.

\section{Non-equilibrium extension via Jarzynski equality\label{appendix jarzynski}}

\subsection{Theoretical sketch}
Consider a system in contact with a thermal bath at inverse temperature $\beta$, with distribution
\[
\pi_\beta(p \mid \lambda)
=\frac{e^{-\beta E_\lambda(p)}}{\hat{Z}_\beta(\lambda)},
\]
where $E_\lambda(p)$ depends on a control parameter $\lambda\in[0,1]$.  
In the reversible case, slow variation of $\lambda$ gives work equal to free energy difference.  
In the irreversible case, extra work appears, but Jarzynski’s equality ensures
\[
\bigl\langle e^{-\beta W}\bigr\rangle
= e^{-\beta \Delta F}.
\]

With the excess cutoff, admissible programs are restricted at each $\lambda$, leading to the partition functions
\[
\hat{Z}^{(\Lambda)}_\beta(\lambda)
=\sum_{p\in\mathcal{P}_\Lambda(\lambda)} e^{-\beta E_\lambda(p)}.
\]
where $\mathcal{P}_\Lambda(\lambda)$ denotes the set of admissible programs under the excess cutoff $\Lambda$ at control parameter $\lambda$.
The Jarzynski equality then reads
\[
\bigl\langle e^{-\beta W}\bigr\rangle
=\frac{\hat{Z}^{(\Lambda)}_\beta(1)}{\hat{Z}^{(\Lambda)}_\beta(0)}
=\exp\!\bigl[-\beta\,\Delta F^{(\Lambda)}_\beta\bigr],
\]
which recovers the reversible work defined in the main text.

\subsection{Numerical protocol (sketch)}

A numerical protocol for estimating similarity at finite temperature is as follows:
\begin{enumerate}
\item Enumerate all programs up to surplus cutoff $\Lambda$ that output $x$; identify those that also output $y$.
\item Discretize $\lambda$ into a sequence $0=\lambda_0<\cdots<\lambda_M=1$.
\item At each $\lambda_k$, sample programs from $\pi_\beta(p\mid \lambda_k)\propto e^{-\beta E_{\lambda_k}(p)}$.
\item For each sampled trajectory, accumulate the work $W$.
\item Estimate $\exp(-\beta\Delta F)$ via the average $\langle e^{-\beta W}\rangle$.
\end{enumerate}

This protocol provides a practical scheme for estimating the finite-temperature similarity measure defined in this work.

\paragraph{Remarks on feasibility and interpretation.}
It should be stressed that the above protocol does not require the exact values of 
$K(x)$ or $K(x,y)$, which are uncomputable in general. 
In practice, one works with finite surrogate sets of programs, 
obtained for example from compression-based approximations, 
minimum description length principles, or bounded searches up to a cutoff length. 
Within such restricted ensembles, the Jarzynski procedure remains valid and yields 
an effective estimate of the free energy difference. 
Thus the non-equilibrium extension should be understood not as a way to compute 
the exact Kolmogorov-based similarity, but as a principled framework for defining 
and estimating approximate similarity measures.

An additional conceptual benefit is that Jarzynski’s framework makes explicit the gap 
between reversible and irreversible processes. 
While our similarity measure is formally defined by the free energy difference between 
$\lambda=0$ and $\lambda=1$, practical approximations, coarse sampling, or limited 
exploration of program space will generally produce an average work 
$\langle W \rangle \ge \Delta F$.  
The excess $\langle W \rangle - \Delta F$ can be interpreted as an 
\emph{irreversible cost of similarity}, quantifying how much apparent dissimilarity 
is introduced by computational or sampling limitations.  
This provides a natural measure of the reliability of approximate similarity estimates.

\bibliographystyle{unsrtnat} 
\bibliography{refs}

\end{document}